\title{\LARGE \bf Planar Cooperative Extremum Seeking with Guaranteed Convergence Using A Three-Robot Formation}
\author{Anna Skobeleva, Baris Fidan, Valeri Ugrinovskii, Ian R. Petersen
\thanks{Anna Skobeleva and Valeri Ugrinovskii are with School of Engineering and Information Technology, University of New South Wales, Canberra, Australia. Baris Fidan is with the Mechanical and Mechatronics Engineering Department, University of Waterloo, ON, Canada. Ian Petersen is with College of Engineering and Computer Science, Australian National University, Canberra, Australia. 
{\tt\small anna.skobeleva@student.adfa.edu.au}}
\thanks{This research was supported by the Australian Research Council under grant DP160101121}
}
\newtheorem{theorem}{Theorem}
\newtheorem{lemma}{Lemma}
\newtheorem{corollary}{Corollary}
\newtheorem{prop}{Proposition}
\newtheorem{assumption}{Assumption}
\newtheorem{problem}{Problem}
\begin{document}


\maketitle
\thispagestyle{plain}
\pagestyle{plain}

\begin{abstract}

In this paper, a combined formation acquisition and cooperative extremum seeking control scheme is proposed for a team of three robots moving on a plane. The extremum seeking task is to find the maximizer of an unknown two-dimensional function on the plane. The function represents the signal strength field due to a source located at maximizer, and is assumed to be locally concave around  maximizer and monotonically decreasing in distance to the source location. Taylor expansions of the field function at the location of a particular lead robot and the maximizer are used together with a gradient estimator based on signal strength measurements of the robots to design and analyze the proposed control scheme. The proposed scheme is proven to exponentially and simultaneously (i) acquire the specified geometric formation and (ii) drive the lead robot to a specified neighborhood disk around maximizer, whose radius depends on the specified desired formation size as well as the norm bounds of the Hessian of the field function. The performance of the proposed control scheme is evaluated using a set of simulation experiments.
\end{abstract}

\section{Introduction}

The problem of signal source localization using mobile sensory agents has been studied from various perspectives and following various technical approaches \cite{Patwari05,Cao06,Path05,Roum02,Dandach08_SCL,Fidan08_TrSP,cochran2009nonholonomic,moore2010source}.
In source localization, the field of interest is unknown and thus the main challenge is to estimate the gradient of the field and then choose an appropriate algorithm that would drive the agent or the formation to a point where the gradient is zero. 

The classical extremum seeking control approach studied in \cite{ariyur2003real}, \cite{zhang2007extremum}, \cite{zhang2007source}, \cite{cochran2009nonholonomic}, \cite{lin2017stochastic} employs a zero-mean dither signal to extract gradient information from the field measurements. A dither signal is introduced as a part the velocity control and can be of a sinusoidal \cite{zhang2007source},\cite{cochran2009nonholonomic} or stochastic \cite{lin2017stochastic} shape. While the above technique does not require any position information, the motion pattern is  inefficient and agents are unable to come to a complete stop at the source location, but rather they continue moving in its vicinity. 

Examples of dither free extremum seeking control techniques for a single agent can be found in \cite{matveev2011navigation}, \cite{fu2008sliding}, \cite{mayhew2007robust}. The authors of \cite{matveev2011navigation} and \cite{fu2008sliding} propose to estimate the field's gradient with a difference of two time distinct measurements taken by the agent and a sliding mode controller for the agent's angular velocity. While sliding mode control is robust to measurement noise and computationally efficient, the chattering effect is a well-known drawback of this approach.

Using  a group of agents instead of a single robot allows for better gradient estimation using a combination of the measurements across the platforms and eliminates the need for auxiliary movements making the search more time and energy efficient. In \cite{ogren2004cooperative}, the gradient is estimated from distributed field measurements using least squares and is further refined by applying a Kalman filter to the history data. The virtual leader is then moving in the direction of the steepest descent/ascent of the gradient. In their subsequent work \cite{zhang2010cooperative}, the authors propose Kalman filter schemes to estimate both the gradient and the Hessian of the unknown field. A significant drawback of this approach is its high computational complexity and sensitivity to communication delays and faults.

A source seeking approach utilizing a circular formation of unicycle-like agents is considered in the series of papers \cite{moore2010source}, \cite{brinon2011collaborative}, \cite{brinon2013consensus}, \cite{brinon2016distributed}. These papers exploit the circular shape of the formation with agents being uniformly distributed around the circle to approximate the field's gradient at the formation centre as an average of the weighted measurements taken by the agents. The reference trajectory for the formation centre is calculated  by integrating the estimated gradient value. An  additional consensus algorithm is used to agree on the estimated gradient direction. In \cite{brinon2013consensus}, the formation center trajectory is generated with a gradient-ascent algorithm, and in \cite{brinon2016distributed}, the authors provide simulation results for scenarios with noisy field measurements, multiple maximum points, and time-varying fields. In the simulation results of \cite{brinon2016distributed}, the formation converges to a neighbourhood of the maximum point or one of the local maximum points, for such scenarios. 

The signal field extremum seeking task of our current paper is the same as that of the aforementioned papers, and we also use a formation control basic cooperative approach. However, different from earlier works, (i) we focus on use of a seed formation with minimal number (three) of robot agents needed for providing static estimates of the field gradient, (ii) we consider formation acquisition and extremum seeking as simultaneous control goals without assuming satisfaction of the desired geometric formation initially, (iii) we formally establish guaranteed convergence results to a certain neighborhood of the extremum point.

The rest of the paper is organized as follows: The simultaneous formation acquisition and extremum seeking control problem is defined in Section II. Section III provides background on the Taylor expansion of functions of vectors, the proposed distributed control design and convergence analysis. Section IV provides the results of simulation tests. The paper is concluded with the final remarks in Section V.

\section{Problem Definition}

The main task we study in this paper is to have a team of three robot agents $A_0$, $A_1$, $A_2$ to search for and move towards a signal source located at an unknown position $x^* \in {\Re ^2}$. Leaving the detailed motion dynamics, low level dynamic control design, and implementation issues to future studies, we consider the following velocity integrator kinematics of the robot agents in this paper:
\begin{equation}
\label{eq:xiDot}
\dot{x}_i(t)=v_i(t),~~~i\in\{0,1,2\},
\end{equation}
where $x_i(t)~=~[x_{ix}(t),x_{iy}(t)]^T\in{\cal P}$ and $v_i(t)~=~[v_{ix}(t),v_{iy}(t)]^T\in\Re^2$ denote, respectively, the position and velocity of agent $A_i$ at time instant $t$.

Simultaneously with the above main signal source seeking task, the three robot agents are desired to acquire and maintain a pre-defined geometric formation defined in terms of the desired values $r_i^*$ of their relative positions
\begin{equation}
\label{eq:ri}
r_i(t)=[r_{ix}(t),r_{iy}(t)]^T=x_i(t)-x_0(t),~~~i\in\{1,2\}.
\end{equation}

The signal strength at any point $x~=~[x_x,x_y]^T\in{\cal P}$  due to the signal source at $x^*$ is denoted by $f(x)$, where $f(\cdot):\Re^2\mapsto [0, \infty)$ is an unknown function which satisfies the following assumptions.

\begin{assumption} \label{as1}
\begin{enumerate}[(i)]
\item The function $f$, its gradient $\nabla f = \left [ \partial_x f,~\partial_y f \right ]^T = \left [ \frac{\partial f}{\partial x_x},~\frac{\partial f}{\partial x_y} \right ]^T$, and its Hessian
$$\nabla^2 f =\begin{bmatrix}
                        \partial_{xx}f & \partial_{xy}f \\
                        \partial_{xy}f & \partial_{xy}f \\
                      \end{bmatrix}
                      =\begin{bmatrix}
                        \frac{\partial^2 f}{\partial x_x^2} & \frac{\partial^2 f}{\partial x_x\partial x_y} \\
                       \frac{\partial^2 f}{\partial x_x\partial x_y} & \frac{\partial^2 f}{\partial x_y^2} \\
                      \end{bmatrix}$$
are continuous, and the entries of $\nabla^2 f$ are continuously differentiable.
\item The function $f$ has a single maximum at a point $x^*$. 
\item There exists a scalar $M_H>0$ such that $\forall~x \in {\Re^2}$, $\|\nabla^2 f(x)\|\leq M_H$.
\item At $x^*$, $\nabla^2 f(x^*)$ is negative definite.
\item There exists a scalar $L_H>0$ such that $\forall~x_1,x_2\in {\Re^2}$,
$\|\nabla^2 f(x_1)-\nabla^2 f(x_2)\| \leq L_H \|x_1-x_2\|$.

\item There exists a scalar $G_H>0$ such that $\forall~x_1,x_2\in {\Re^2}$,
$\|\nabla^2 f(x_1)-\nabla^2 f(x_2)\| \leq G_H$.
\end{enumerate}
\end{assumption}

Next, we formulate the simultaneous formation acquisition and extremum seeking problem explained above.

\begin{problem}
Consider three robot agents $A_0$, $A_1$, $A_2$ with motion kinematics \eqref{eq:xiDot}, and a signal source located at an unknown position $x^* \in {\Re ^2}$. The signal strength distribution due to this source is represented by an unknown function $f(\cdot):\Re^2\mapsto [0, \infty)$ which satisfies \textit{Assumption} \ref{as1}. Assume that the agents can sense their positions $x_i(t)$ and the signal strengths $f_i(t)=f(x_i(t))$ ($i\in\{0,1,2\}$) and communicate among themselves. The problem is to design control laws to produce the velocities $v_i$ such that $x_0(t)$ converges to a bounded neighbourhood of the unique maximizer $x^*$ of $f(\cdot)$.
\end{problem}

\begin{figure}[h!]
\centering
\includegraphics[scale=0.4]{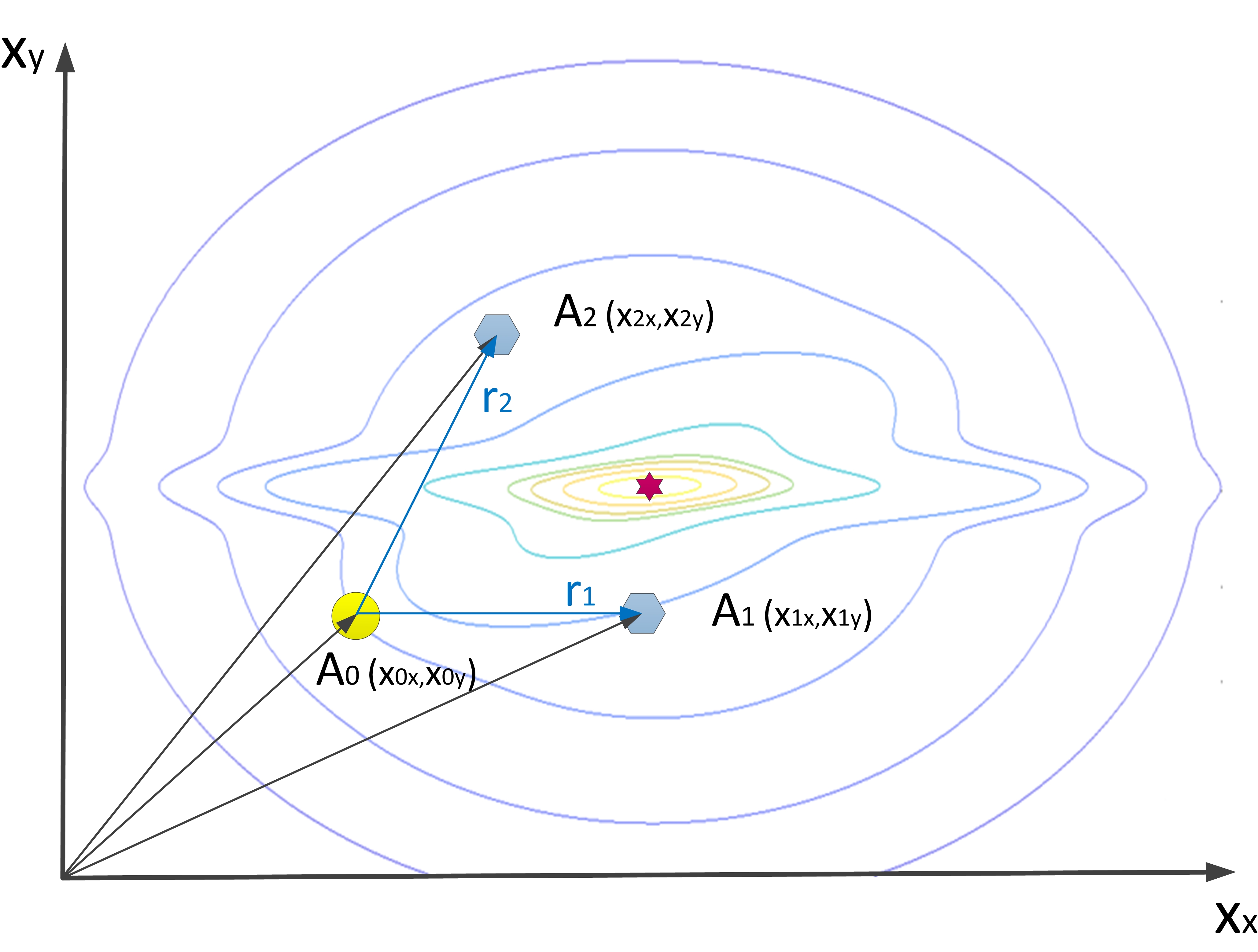}
\caption{\small{Illustration of Problem 1.}}\label{fig:2Dsetup}
\end{figure}

\section{Simultaneous Extremum Seeking and Formation Acquisition }
\subsection{Taylor Expansions of Functions of 2D Vectors}

The control design we present in the following subsections to solve Problem 1 utilizes Taylor expansions for the function $f(\cdot)$ and its gradient $\nabla f(\cdot)$. In this subsection, we summarize these expansion formulations. The following fact is a special case of Theorem 5.2 in \cite{Lang87} for $f(\cdot):\Re^2\mapsto [0, \infty)$:

\begin{prop} \label{prp:Taylor1}
Assume that $f$ is continuously differentiable up to order 2 within a given open set ${\cal X}\subset \Re^2$. Consider two vectors $\bar{x},x\in{\cal X}$ and the difference vector $h=x-\bar{x}$. There exists a number $0\leq \tau \leq 1$ such that
\begin{eqnarray*}
f(x)&=&f(\bar{x})+h^T\nabla f(\bar{x})+\frac{1}{2}h^T \nabla^2 f(\bar{x}+\tau h) h.
\end{eqnarray*}
\end{prop}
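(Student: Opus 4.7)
The plan is to reduce the two-dimensional statement to the classical one-dimensional Taylor theorem with Lagrange remainder by restricting $f$ to the line segment joining $\bar{x}$ and $x$. For this reduction to make sense I would implicitly assume (or point out as needed) that the segment $\{\bar{x}+th : t\in[0,1]\}$ lies in $\mathcal{X}$; this is routine if $\mathcal{X}$ is taken to be convex, which is the setting in which the result is invoked later.

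Concretely, first I would introduce the auxiliary scalar function $g:[0,1]\to\Re$ defined by $g(t)=f(\bar{x}+th)$. Because $f$ is $C^2$ on $\mathcal{X}$ and $t\mapsto \bar{x}+th$ is smooth with image in $\mathcal{X}$, the composite $g$ is $C^2$ on $[0,1]$, so the standard single-variable Taylor theorem applies: there exists $\tau\in[0,1]$ such that
\begin{equation*}
g(1)=g(0)+g'(0)+\tfrac{1}{2}g''(\tau).
\end{equation*}

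Next I would translate the derivatives of $g$ back into derivatives of $f$ via the chain rule. Differentiating once yields $g'(t)=h^T\nabla f(\bar{x}+th)$, and differentiating again (using that the entries of $\nabla^2 f$ are continuous, which is part of Assumption \ref{as1}(i)) yields $g''(t)=h^T\nabla^2 f(\bar{x}+th)\,h$. Substituting these into the one-dimensional expansion with $g(1)=f(x)$ and $g(0)=f(\bar{x})$ immediately gives the claimed identity.

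The argument contains no real obstacle: the only subtlety is making sure the chain-rule computation of $g''$ is justified, which follows directly from the continuous differentiability hypothesis. Apart from that, everything is a direct invocation of single-variable Taylor's theorem with the Lagrange form of the remainder, so the proof is essentially a one-line reduction followed by two chain-rule computations.
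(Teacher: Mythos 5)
Your proof is correct and is the standard argument; the paper itself gives no proof but simply cites this as a special case of Theorem 5.2 in Lang, whose proof is exactly the reduction to one-variable Taylor's theorem via $g(t)=f(\bar{x}+th)$ that you carry out. Your remark that the segment from $\bar{x}$ to $x$ must lie in $\mathcal{X}$ (e.g.\ by taking $\mathcal{X}$ convex) is a legitimate hypothesis the paper's statement leaves implicit, and is needed for the argument to go through.
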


\textit{Proposition} \ref{prp:Taylor1} further has the following corollary:
\begin{corollary} \label{cor:Taylor1}
Assume that $f$ is continuously differentiable up to order 3 within a given open set ${\cal X}\subset \Re^2$. Consider two vectors $\bar{x},x\in{\cal X}$ and the difference vector $h=x-\bar{x}$. There exist numbers $0\leq \tau_1, \tau_2 \leq 1$ such that
\begin{eqnarray*}
\nabla f(x)&=&\nabla f(\bar{x})+\hat{\nabla}^2 f(\bar{x}+\tau_1 h,\bar{x}+\tau_2 h)h,
\end{eqnarray*}
where $\hat{\nabla}^2 f(\xi_1,\xi_2)$ for $\xi_1,\xi_2\in{\cal X}$ is defined by
\begin{eqnarray} \label{eq:nabla2hat}
\hat{\nabla}^2 f(\xi_1,\xi_2)&=&
\begin{bmatrix}
                        \partial_{xx}f(\xi_1) & \partial_{xy}f(\xi_1) \\
                        \partial_{xy}f(\xi_2) & \partial_{xy}f(\xi_2) \\
                      \end{bmatrix}.
\end{eqnarray}
\end{corollary}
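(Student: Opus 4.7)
The strategy is to apply the classical one-variable mean value theorem to each scalar-valued component of the gradient separately. Proposition~\ref{prp:Taylor1} produces a Taylor-type identity for $f$ itself at a \emph{single} intermediate point, so one cannot simply read the claim off from it: the corollary needs two \emph{independent} intermediate points $\bar{x}+\tau_1 h$ and $\bar{x}+\tau_2 h$ for the two rows of $\hat{\nabla}^2 f$, and for that the partial derivatives $\partial_x f$ and $\partial_y f$ must be handled one at a time.

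Concretely, I would introduce the two auxiliary scalar functions $\phi_1,\phi_2:[0,1]\to\Re$ defined by $\phi_1(t)=\partial_x f(\bar{x}+th)$ and $\phi_2(t)=\partial_y f(\bar{x}+th)$. Since $f\in C^3(\mathcal{X})$, each $\phi_i$ is $C^2$ on $[0,1]$, hence in particular differentiable with a continuous derivative. The standard one-dimensional mean value theorem then produces $\tau_i\in[0,1]$ with $\phi_i(1)-\phi_i(0)=\phi_i'(\tau_i)$. The chain rule evaluates the derivatives as $\phi_1'(t)=\nabla(\partial_x f)(\bar{x}+th)^T h$ and $\phi_2'(t)=\nabla(\partial_y f)(\bar{x}+th)^T h$, so unpacking the two identities yields
\begin{align*}
\partial_x f(x)-\partial_x f(\bar{x}) &= \bigl[\,\partial_{xx}f(\bar{x}+\tau_1 h),~\partial_{xy}f(\bar{x}+\tau_1 h)\,\bigr]\,h, \\
\partial_y f(x)-\partial_y f(\bar{x}) &= \bigl[\,\partial_{xy}f(\bar{x}+\tau_2 h),~\partial_{yy}f(\bar{x}+\tau_2 h)\,\bigr]\,h.
\end{align*}
Stacking these two row identities vertically delivers exactly $\nabla f(x)-\nabla f(\bar{x})=\hat{\nabla}^2 f(\bar{x}+\tau_1 h,\bar{x}+\tau_2 h)\,h$, with the $2\times 2$ coefficient matrix having the row-wise structure prescribed by~\eqref{eq:nabla2hat}.

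There is no real obstacle here: the argument is essentially a component-wise mean value theorem. The only subtlety worth flagging is that the corollary is \emph{not} a direct consequence of Proposition~\ref{prp:Taylor1} applied to $\partial_x f$ and $\partial_y f$, because that statement is a \emph{second-order} expansion and would introduce superfluous third-derivative terms. What is needed is the first-order mean value identity, which relies only on the differentiability of $\partial_x f$ and $\partial_y f$. Accordingly, the $C^3$ hypothesis stated in the corollary is slightly stronger than the $C^2$ regularity the present argument actually consumes, but it is harmless given the smoothness of $f$ assumed elsewhere in the paper.
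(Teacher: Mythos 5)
Your argument is correct and is precisely the standard derivation the paper leaves implicit (the corollary is stated without proof as a consequence of Proposition~\ref{prp:Taylor1}): a componentwise one-dimensional mean value theorem applied to $t\mapsto\partial_x f(\bar{x}+th)$ and $t\mapsto\partial_y f(\bar{x}+th)$, which is exactly why two independent parameters $\tau_1,\tau_2$ appear, and your observation that the second-order expansion of Proposition~\ref{prp:Taylor1} is not what is needed here is accurate. The only remark worth adding is that the $(2,2)$ entry of $\hat{\nabla}^2 f$ in \eqref{eq:nabla2hat} should read $\partial_{yy}f(\xi_2)$, as your second row correctly has it; the paper's $\partial_{xy}f(\xi_2)$ there (and in the Hessian display of Assumption~\ref{as1}) is a typo.
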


\subsection{Control Algorithm}
Extending the gradient search and Taylor expansion based approach of \cite{skobeleva2017esc1d} to the two dimensional setting of this paper, we propose the following distributed control law for the agents $A_0$, $A_1$, $A_2$:

\begin{equation}
\begin{aligned}
v_0(t)&=K_0g(t),\\
v_1(t)&=-K_1(r_1(t)-r_1^*) + K_0 g(t),\\
v_2(t)&=-K_2(r_2(t)-r_2^*) + K_0 g(t),
\end{aligned}
\label{eq:vi}
\end{equation}
where $K_i=k_i I_2$ for some positive gain scalar gain $k_i$, $r_i(t)$ is the measurable relative position as defined in (\ref{eq:ri}) and $r_i^* $ is its desired value for the formation acquisition task in Problem 1, for $i\in\{1,2\}$, and
\begin{equation}
g(t)=
R^{-1}(t)
\begin{bmatrix}
f_1(t)-f_0(t) \\ f_2(t)-f_0(t)
\end{bmatrix},~~R(t)=\begin{bmatrix}
r_1^T(t)\\
r_2^T(t)
\end{bmatrix}
\label{eq:g}
\end{equation}
is an instantaneous approximation of the gradient $\nabla f(x_0(t))$, calculated using the measurements of $x_i(t),f_i(t)$, $i\in\{0,1,2\}$. Note that for the instantaneous approximation \eqref{eq:g} to be well defined the matrix $R(t)$ needs to be invertible. The invertibility of $R(t)$ is analyzed in the next subsection.

\subsection{Invertibility of $R(t)$}
In this subsection we establish some practical conditions for the formation matrix $R(t)$ in \eqref{eq:g} to be invertible. These conditions will ensure that the control algorithm (\ref{eq:vi}), (\ref{eq:g}) is implementable.
\begin{assumption} \label{as:formation}
Given initial positions of the agents, choose indexing $A_{0}$, $A_{1}$, $A_{2}$ of these agents and the orientation of the formation described by the matrix
\begin{equation}
R^*=\begin{bmatrix}
r_1^{*T}\\
r_2^{*T}
\end{bmatrix}
\label{eq:Rs}
\end{equation}
such that:
\begin{enumerate}[(i)]
\item $r_1^*$ and $r_{10}=r_1(0)$ have the same direction;
\item $x_2^*(0)=x_0(0)+r_2^*$ and $x_2(0)$ are on the same half-plane with respect to the $x_{0}(0)x_{1}(0)$.
\end{enumerate}
\end{assumption}

Figure \ref{fig:formConfig} illustrates a formation setting satisfying \textit{Assumption} \ref{as:formation}.

\begin{figure}[h!]
\centering
\includegraphics[scale=0.5]{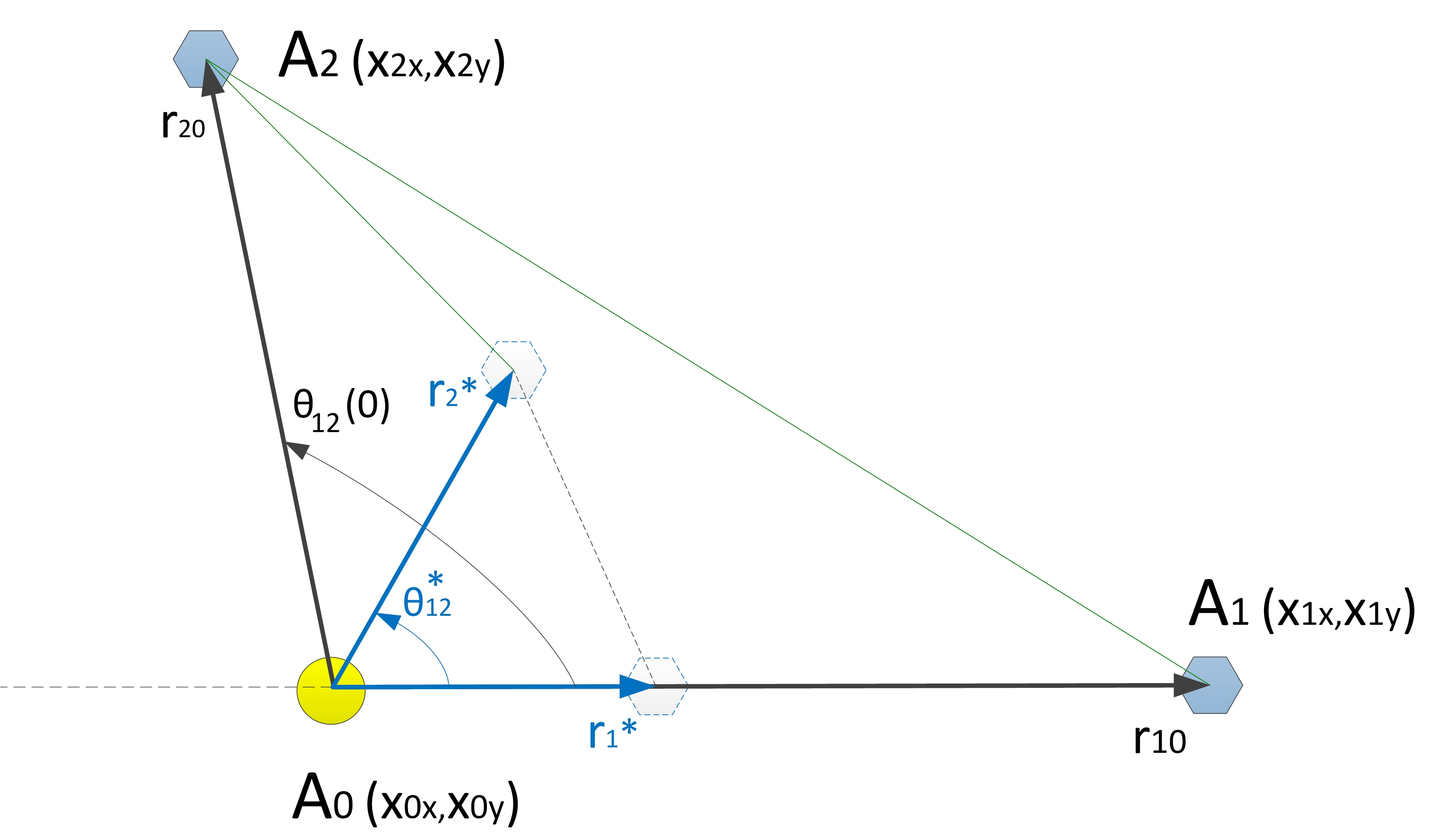}
\caption{\small{A target formation setting satisfying \textit{Assumption} \ref{as:formation}. }}\label{fig:formConfig}
\end{figure}

\begin{lemma} \label{lm:formation}
Denote the counterclockwise angles from $r_1^*$ to $r_2^*$ and from $r_1(t)$ to $r_2(t)$, respectively, by $\theta _{12}^*$ and $\theta _{12}(t)$. If \textit{Assumption} \ref{as:formation} is satisfied, for all $t\geq 0$, the sign of $\sin (\theta _{12}(t))$ is the same as the signs of
$$\rho^*=\sin (\theta _{12}^*),~\rho_0=\sin (\theta _{12}(0))$$ 
and the following inequality holds:
$$|\sin (\theta _{12}(t))| \geq \min (|\rho^*|, |\rho_0|).$$ 
\end{lemma}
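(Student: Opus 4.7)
The plan is to exploit the fact that the control law \eqref{eq:vi} decouples the relative dynamics from the (possibly ill-defined) gradient estimate $g(t)$, solve those decoupled dynamics explicitly, and then combine the resulting convex-combination formula with the geometric content of Assumption \ref{as:formation} to extract both the sign claim and the magnitude bound.

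First, I would subtract $v_0$ from $v_i$ in \eqref{eq:vi} to obtain the relative dynamics $\dot r_i = -k_i(r_i - r_i^*)$ for $i = 1,2$. These are linear ODEs that do not involve $g(t)$, so they are well-posed regardless of the invertibility of $R(t)$, and their solutions $r_i(t) = e^{-k_i t} r_i(0) + (1 - e^{-k_i t}) r_i^*$ are convex combinations of $r_i(0)$ and $r_i^*$. Assumption \ref{as:formation}(i) says $r_1(0)$ and $r_1^*$ are positive multiples of a common unit vector $\hat r_1$, so $r_1(t) = \alpha_1(t)\hat r_1$ for some $\alpha_1(t) > 0$ and all $t \geq 0$. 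Assumption \ref{as:formation}(ii) places $r_2(0)$ and $r_2^*$ in a common open half-plane bounded by the line $\mathrm{span}(\hat r_1)$, and since such a half-plane is convex, $r_2(t)$ remains in it for all $t \geq 0$.

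For the sign claim, I would expand the scalar cross product using $\hat r_1 \times r_2(0) = |r_2(0)|\rho_0$ and $\hat r_1 \times r_2^* = |r_2^*|\rho^*$. Bilinearity then yields
\begin{equation*}
|r_1(t)||r_2(t)|\sin\theta_{12}(t) = \alpha_1(t)\bigl[e^{-k_2 t}|r_2(0)|\rho_0 + (1-e^{-k_2 t})|r_2^*|\rho^*\bigr].
\end{equation*}
Since $\rho_0$ and $\rho^*$ share a common sign (both endpoints lie in the same half-plane), so does the bracket, and dividing by $|r_1(t)||r_2(t)| > 0$ settles the sign statement.

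For the magnitude bound, I would pass to absolute values and apply the triangle inequality $|r_2(t)| \leq e^{-k_2 t}|r_2(0)| + (1-e^{-k_2 t})|r_2^*|$ in the denominator to obtain
\begin{equation*}
|\sin\theta_{12}(t)| \geq \frac{e^{-k_2 t}|r_2(0)||\rho_0| + (1-e^{-k_2 t})|r_2^*||\rho^*|}{e^{-k_2 t}|r_2(0)| + (1-e^{-k_2 t})|r_2^*|},
\end{equation*}
which is a weighted average of $|\rho_0|$ and $|\rho^*|$ with positive weights, hence at least $\min(|\rho_0|,|\rho^*|)$. The only subtle point, which I expect to be the main snag, is the direction of the triangle inequality: it must be applied to $|r_2(t)|$ in the denominator so that its reciprocal gets bounded below and the final inequality survives; otherwise the derivation is a routine linearity computation.
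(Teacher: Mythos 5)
Your proof is correct, and it takes a genuinely different route from the paper's. The paper argues geometrically: it asserts (via a case analysis on the triangle formed by $r_1^*$ and $r_2(t)$) that $\theta_{12}(t)$ always lies between $\theta_{12}(0)$ and $\theta_{12}^*$, and then concludes $|\sin\theta_{12}(t)| \geq \min(|\rho_0|,|\rho^*|)$ — a step that silently uses the concavity of $\sin$ on $(0,\pi)$, and whose "angle stays between" premise is stated rather than proved. You instead solve the decoupled relative dynamics explicitly, observe that $r_i(t)$ is a convex combination of $r_i(0)$ and $r_i^*$, and exploit bilinearity of the planar cross product to obtain a closed-form expression for $|r_1(t)||r_2(t)|\sin\theta_{12}(t)$; the sign claim then follows from the shared sign of $\rho_0,\rho^*$ (same open half-plane), and the magnitude bound from cancelling $|r_1(t)|=\alpha_1(t)$ and bounding the denominator $|r_2(t)|$ above by the triangle inequality, yielding a weighted average of $|\rho_0|$ and $|\rho^*|$. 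Your version is more self-contained and airtight — in particular it dispenses with the unproved monotone-angle claim and with concavity of sine, and it automatically handles well-definedness ($r_2(t)\neq 0$ because the open half-plane excludes the origin). The paper's version is shorter and conveys the geometric intuition more directly. You correctly identified the one genuinely delicate point, namely that the triangle inequality must be applied to the denominator so the reciprocal is bounded below; with that in place the argument is complete.
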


\begin{proof}
If $\theta _{12}^* \in (0, \pi)$ is as illustrated in Figure \ref{fig:formConfig}, then by \textit{Assumption} \ref{as:formation} $(ii)$, $\theta _{12}(0)\in (0, \pi)$. Using the notation
$$\tilde{x}_i(t)=x_0(t)+r_i^*$$
and observing the triangle $x_0(t)\tilde{x}_1(t)x_2(t)$, that is formed by vectors $r_1^*$ and $r_2(t)$, we have:
\begin{enumerate}[1.]
\item If $\theta _{12}(0) >\theta _{12}^*$, then $\theta _{12}^* \leq \theta _{12}(t) \leq \theta _{12}(0)$, $\forall t$;
\item If $\theta _{12}(0) \leq \theta _{12}^*$, then $\theta _{12}(0) \leq \theta _{12}(t) \leq \theta _{12}^*$, $\forall t$.
\end{enumerate}

We can repeat the analysis above for the case when $\theta _{12}^* \in (\pi, 2 \pi)$. We conclude that $\theta _{12}(t)$ is always between $\theta _{12}(0)$ and $\theta _{12}^*$. Therefore,  $ \sin (\theta _{12}(t)) \geq \min (\sin (\theta _{12}^*),\sin (\theta _{12}(0))) \geq 0$  $\forall t$.\\
\end{proof}

\begin{corollary} \label{cor:formation}
For any $t \geq 0$,
\begin{equation*}
\left | det (R(t)) \right | \geq \min(|\rho _0|, |\rho ^*|) \|r_{1}(t)\| \|r_{2}(t)\|.
\end{equation*}
\end{corollary}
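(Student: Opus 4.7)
The plan is to reduce the determinant of $R(t)$ to a geometric quantity involving the angle between $r_1(t)$ and $r_2(t)$, and then apply Lemma \ref{lm:formation} directly.

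First, I would write out the determinant explicitly using the definition of $R(t)$ in \eqref{eq:g}. Since the rows of $R(t)$ are $r_1^T(t)$ and $r_2^T(t)$, one has $\det(R(t)) = r_{1x}(t) r_{2y}(t) - r_{1y}(t) r_{2x}(t)$. This scalar is exactly the $z$-component of the three-dimensional cross product of the embeddings of $r_1(t)$ and $r_2(t)$ into $\Re^3$, so the standard identity gives
\begin{equation*}
\det(R(t)) = \|r_1(t)\|\,\|r_2(t)\|\,\sin(\theta_{12}(t)),
\end{equation*}
where $\theta_{12}(t)$ is the counterclockwise angle from $r_1(t)$ to $r_2(t)$ introduced in Lemma \ref{lm:formation}.

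Taking absolute values yields $|\det(R(t))| = \|r_1(t)\|\,\|r_2(t)\|\,|\sin(\theta_{12}(t))|$. The lower bound on $|\sin(\theta_{12}(t))|$ is exactly the content of Lemma \ref{lm:formation}, which states that $|\sin(\theta_{12}(t))| \geq \min(|\rho^*|, |\rho_0|)$ for all $t \geq 0$ under Assumption \ref{as:formation}. Substituting this inequality into the previous expression gives
\begin{equation*}
|\det(R(t))| \geq \min(|\rho_0|, |\rho^*|)\,\|r_1(t)\|\,\|r_2(t)\|,
\end{equation*}
which is the claim.

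Since all the heavy lifting is done by Lemma \ref{lm:formation}, there is no real obstacle here; the proof is essentially a one-line application of the cross-product formula for the determinant of a $2\times 2$ matrix followed by invocation of the previously established angular bound. The only thing to be careful about is confirming that the signed-angle convention used in Lemma \ref{lm:formation} is consistent with the sign of $\det(R(t))$, which it is by construction of $\theta_{12}(t)$ as the counterclockwise angle from $r_1(t)$ to $r_2(t)$.
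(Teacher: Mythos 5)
Your proof is correct and follows essentially the same route as the paper: both express $|\det(R(t))|$ as $\|r_1(t)\|\,\|r_2(t)\|\,|\sin\theta_{12}(t)|$ via the cross-product identity and then invoke the lower bound on $|\sin\theta_{12}(t)|$ from Lemma \ref{lm:formation}. No gaps.
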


\begin{proof} 
Note that $det (R(0)) = \rho_0 \|r_{1}(0)\| \|r_{2}(0)\|$ and $det (R^*) = \rho^* \|r_{1}^*\| \|r_{2}^*\|$. Hence,
the result directly follows from Lemma \ref{lm:formation} and the fact that
\begin{equation*}
\left | det (R(t)) \right | = |r_1(t) \times r_2(t)|= \|r_{1}(t)\|\|r_{2}(t)\| |\sin \theta_{12}(t)|.
\end{equation*}
\end{proof}

We conclude that if the \textit{Assumption} \ref{as:formation} is satisfied, then $R(t)$ is non singular. Therefore, $R^{-1}(t)$ exists and (\ref{eq:g}) is implementable for all $t \geq 0$.

\subsection{Formation Acquisition and Extremum Seeking Convergence Analysis}
In this subsection we show that the formation control task is achieved exponentially fast. Then, we analyze stability and convergence properties of the extremum seeking dynamics.
 
Defining relative position errors with respect to the desired formation
\begin{equation}
\delta_i(t) = r_i(t)-r_i^*,~i\in\{1,2\},
\label{eq:deltai}
\end{equation}
from (\ref{eq:ri}), (\ref{eq:vi}), we have
\begin{equation}
\dot{\delta}_i(t) =\dot{x}_i(t)-\dot{x}_0(t)=-K_i \delta_i(t),~i\in\{1,2\},
\label{eq:deltaiDot}
\end{equation}
and hence
\begin{equation}
\delta_i(t) =\delta_{i0}e^{-k_it},\forall t\geq 0, ~i\in\{1,2\},
\label{eq:deltaioft}
\end{equation}
where $\delta_{i0}=\delta_{i}(0)$. \eqref{eq:deltaioft} further implies that
\begin{equation}
r_i(t) =r_i^*+\delta_{i0}e^{-k_it},\forall t\geq 0, ~i\in\{1,2\}.
\label{eq:rioft}
\end{equation}

Next, we analyze of the extremum seeking convergence properties.
Combining (\ref{eq:xiDot}), \eqref{eq:vi} and (\ref{eq:g}), the dynamics of the agent $A_0$ are given by
\begin{equation}
\dot{x}_0(t)=
K_0
R^{-1}(t)
\begin{bmatrix}
f_1(t)-f_0(t) \\ f_2(t)-f_0(t)
\end{bmatrix}.
\label{eq:x0Dot}
\end{equation}
To analyse the right hand side of equation (\ref{eq:x0Dot}), we use the Taylor expansions of $f$ and $\nabla f$ around $x_0$ and $x^*$, respectively. First, we obtain an expansion of $f_i=f(x_i)$ around $x_0$ with a reminder in the integral form \cite{marsden2003vector}:
\begin{equation}
f_i-f_0=r_i^T \nabla f (x_0)+\int _0 ^1 (1-\xi_i)r_i^T \nabla ^2f(x_0+\xi_i r_i)r_i d\xi_i,
\label{eq:fif0}
\end{equation}
for $i\in\{1,2\}$, where $\xi_i$ comes from parametrization of the segment $[x_0,x_i]$: $x_i=x_0+\xi_i r_i$.

Later in the analysis, we will consider a special case of \eqref{eq:fif0} corresponding to $\delta_i=0$. In this case, $r_i=r_i^*$, $x_i=x_0+r_i^*$ and $f_i=f_i^*=f(x_0+r_i^*)$, and \eqref{eq:fif0} becomes
\begin{equation}
f_i^*-f_0=r_i^{*T} \nabla f (x_0)+\int _0 ^1 (1-\xi_i)r_i^{T*} \nabla ^2f(x_0+\xi_i r_i^*)r_i^*d\xi_i,
\label{eq:fif0s}
\end{equation}
for $i\in\{1,2\}$.

(\ref{eq:x0Dot}) and (\ref{eq:fif0}) imply that
\begin{equation}
\begin{aligned}
\dot{x}_0=&
K_0\nabla f (x_0)\\
&+\frac{1}{2}K_0
R^{-1}
\begin{bmatrix}
\int _0 ^1 (1-\xi_1)r_1^T \nabla ^2f(x_0+\xi_1 r_1)r_1 d\xi_1 \\
\int _0 ^1 (1-\xi_2)r_2^T \nabla ^2f(x_0+\xi_2 r_2)r_2 d\xi_2
\end{bmatrix}.
\label{eq:x0Dot2}
\end{aligned}
\end{equation}

Next, defining the extremum seeking error
\begin{equation}
z(t) =x_0(t)-x^*,
\label{eq:z}
\end{equation}
we apply Corollary \ref{cor:Taylor1} to further expand the term $\nabla f (x_0)$ in \eqref{eq:x0Dot2} around the unknown extremum point $x^*$:

\begin{equation}
\begin{aligned}
\nabla f (x_0)&=\nabla f (x^*)+ \hat{\nabla} ^2 f(\xi_{01},\xi_{02})(x_0-x^*) \\
&=\nabla ^2 f(x^*)z+(\hat{\nabla} ^2 f(\xi_{01},\xi_{02})-\nabla ^2 f(x^*))z,
\end{aligned}
\label{eq:TaylorGradX0}
\end{equation}
where $\xi_{01},\xi_{02}$ are certain points on the segment $[x^*,x_0]$, noting that $\nabla f (x^*)=0$ at the extremum point $x^*$.

Combining (\ref{eq:x0Dot2}), (\ref{eq:z}), (\ref{eq:TaylorGradX0}), we obtain

\begin{equation}
\begin{aligned}
\dot{z}=&\dot{x}_0=
K_0 \nabla^2 f (x^*)z+B_{\beta}z\\
&+\frac{1}{2}K_0
R^{-1}
\begin{bmatrix}
\int _0 ^1 (1-\xi_1)r_1^T \nabla ^2f(x_0+\xi_1 r_1)r_1 d\xi_1 \\
\int _0 ^1 (1-\xi_2)r_2^T \nabla ^2f(x_0+\xi_2 r_2)r_2 d\xi_2
\end{bmatrix},
\end{aligned}
\label{eq:zDot}
\end{equation}
where
\begin{equation}
B_{\beta}=K_0\left ( \hat{\nabla} ^2 f(\xi_{01},\xi_{02})-\nabla ^2 f(x^*) \right ).
\label{eq:Bbeta}
\end{equation}

To combine the differential equations for the formation and extremum seeking errors in a single compact equation, we further define $\delta = [\delta_1^T,\delta_2^T]^T$ and the six dimensional stacked error vector
\begin{equation}\label{eq:chi}
 \chi=\begin{bmatrix}\delta^T & z^T \end{bmatrix}^T=\begin{bmatrix}\delta_1^T & \delta_2^T & z^T \end{bmatrix}^T,
\end{equation}
and consider an uncertain system:
\begin{equation}
\begin{aligned}
\dot{\chi}=&
\underbrace{\begin{bmatrix}
-K_1 &O_2 &O_2\\ O_2&-K_2 &O_2\\O_2 &O_2 & K_0\nabla^2 f (x^*)
\end{bmatrix}}_{A}\chi \\
&+\underbrace{\begin{bmatrix}
O_2\\O_2\\I_2
\end{bmatrix}}_{B} (\phi_1 + \phi_2(t))\\&=A\chi +B(\phi_1+\phi_2^*+\phi_3(t)),
\label{eq:ChiDot}
\end{aligned}
\end{equation}

with three uncertainty inputs $\phi_1$, $\phi_2^*$ and $\phi_3(t)$ defined as:

\begin{equation}
\begin{aligned}
&\phi_1=B_{\beta}z=B_{\beta}C_1 \chi,~~ C_1=\begin{bmatrix}O_2&O_2&I_2\end{bmatrix};\\
&\phi_2(t)= \frac{1}{2}K_0
R^{-1}
\begin{bmatrix}
\int _0 ^1 (1-\xi_1)r_1^T \nabla ^2f(x_0+\xi_1 r_1)r_1 d\xi_1 \\
\int _0 ^1 (1-\xi_2)r_2^T \nabla ^2f(x_0+\xi_2 r_2)r_2 d\xi_2
\end{bmatrix};\\
&\phi_2^*=\frac{1}{2}K_0
R^{*-1}
\begin{bmatrix}
\int _0 ^1 (1-\xi_1)r_1^{T*} \nabla ^2f(x_0+\xi_1 r_1^*)r_1^* d\xi_1 \\
\int _0 ^1 (1-\xi_2)r_2^{T*} \nabla ^2f(x_0+\xi_2 r_2^*)r_2^* d\xi_2
\end{bmatrix},\\
& R^{*-1}= \begin{bmatrix}r_1^{*T}\\r_2^{*T} \end{bmatrix} ^{-1};\\
&\phi_3(t)=\phi_2(t)-\phi_2^*.
\label{eq:phiDef}
\end{aligned}
\end{equation}

\begin{lemma} \label{lm:phiBounds}
The uncertainty inputs $\phi_1(t), ~\phi_2^*$ in (\ref{eq:phiDef}) satisfy 
\begin{equation}
\begin{aligned}
&\|\phi_1(t)\| \leq  \alpha_1 \|C_1 \chi\|,~~\alpha_1=2G_Hk_0, \\
&\|\phi_2^*\| \leq  \alpha_2^* ,\\
&\alpha_2^*=\frac{1}{2}\Bigg\|K_0 R^{*-1}
\begin{bmatrix}
\int _0 ^1 (1-\xi_1)r_1^{T*} \nabla ^2f(x_0+\xi_1 r_1^*)r_1^* d\xi_1 \\
\int _0 ^1 (1-\xi_2)r_2^{T*} \nabla ^2f(x_0+\xi_2 r_2^*)r_2^* d\xi_2
\end{bmatrix}\Bigg\| \\
&\leq \frac{k_0M_H}{2}\| R^{*-1} \| ( \|r_1^*\|^2 + \| r_2^* \|^2 ),\\
&\forall t\geq 0.
\label{eq:constraints1}
\end{aligned}
\end{equation}
Further, the uncertainty input $\phi_3(t)$ is exponentially decaying to zero
\begin{equation}
\|\phi_3(t)\| =  \varepsilon e^{-\beta t},
\label{eq:constraints2}
\end{equation}
where $\varepsilon, ~ \beta >0$.
\end{lemma}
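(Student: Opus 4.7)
The plan is to treat the three bounds one at a time, reusing the exponential decay of the formation error already derived in \eqref{eq:deltaioft}--\eqref{eq:rioft} and the structural bounds from Assumption \ref{as1}.

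For the bound on $\phi_1$, I would write $\phi_1 = K_0(\hat{\nabla}^2 f(\xi_{01},\xi_{02}) - \nabla^2 f(x^*))\,C_1\chi$ and split the matrix difference through the intermediate quantity $\nabla^2 f(\xi_{01})$, namely
\begin{equation*}
\hat{\nabla}^2 f(\xi_{01},\xi_{02}) - \nabla^2 f(x^*) = \bigl[\hat{\nabla}^2 f(\xi_{01},\xi_{02}) - \nabla^2 f(\xi_{01})\bigr] + \bigl[\nabla^2 f(\xi_{01}) - \nabla^2 f(x^*)\bigr].
\end{equation*}
The second bracket is bounded by $G_H$ directly from Assumption \ref{as1}(vi). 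The first bracket only has its bottom row nonzero, and that row is also a row of $\nabla^2 f(\xi_{02}) - \nabla^2 f(\xi_{01})$, whose spectral norm is again at most $G_H$. The triangle inequality and $\|K_0\|=k_0$ then yield $\|\phi_1\| \leq 2G_H k_0 \|C_1\chi\| = \alpha_1\|C_1\chi\|$.

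For $\phi_2^*$, the argument is a direct norm computation: each scalar integrand satisfies $|(1-\xi_i)\,r_i^{*T}\nabla^2 f(x_0+\xi_i r_i^*)\,r_i^*| \leq (1-\xi_i)M_H\|r_i^*\|^2$ by Assumption \ref{as1}(iii), so each integral is at most $\tfrac{M_H}{2}\|r_i^*\|^2$. Using $\sqrt{a^2+b^2}\leq |a|+|b|$ for the Euclidean norm of the stacked vector and then submultiplicativity of the induced matrix norm with $\|K_0\|=k_0$, we obtain the stated bound on $\|\phi_2^*\|$.

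The main obstacle is $\phi_3(t)=\phi_2(t)-\phi_2^*$, because it mixes three exponentially vanishing contributions: (a) $R^{-1}(t)-R^{*-1}$, (b) the quadratic-form factors $r_i^T(\cdot)r_i$ versus $r_i^{*T}(\cdot)r_i^*$, and (c) the Hessian arguments $x_0+\xi_i r_i$ versus $x_0+\xi_i r_i^*$. From \eqref{eq:rioft} we have $r_i(t)-r_i^* = \delta_{i0}e^{-k_i t}$, so $R(t)-R^*$ and each of $r_i(t)-r_i^*$ decay as $e^{-\min(k_1,k_2)t}$. For (a) I would use the identity $R^{-1}-R^{*-1} = R^{-1}(R^*-R)R^{*-1}$ together with the uniform lower bound $|\det R(t)|\geq \min(|\rho_0|,|\rho^*|)\|r_1(t)\|\|r_2(t)\|$ from Corollary \ref{cor:formation} (noting $\|r_i(t)\|\to\|r_i^*\|>0$, so $\|R^{-1}(t)\|$ is uniformly bounded for all $t\geq 0$). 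For (c), Assumption \ref{as1}(v) gives $\|\nabla^2 f(x_0+\xi_i r_i)-\nabla^2 f(x_0+\xi_i r_i^*)\|\leq L_H\xi_i\|r_i-r_i^*\|$, another factor that vanishes exponentially. Writing $\phi_3$ as a telescoping sum in which one of $R^{-1}$, the outer $r_i$'s, the inner $r_i$'s, or the Hessian arguments is perturbed at a time, and bounding each term by the product of a uniform bound (from $M_H$, $\|R^{-1}(t)\|$, and $\max_s \|r_i(s)\|$) times an exponential factor $e^{-\beta t}$ with $\beta=\min(k_1,k_2)$, the triangle inequality collapses everything to $\|\phi_3(t)\|\leq \varepsilon e^{-\beta t}$ for a suitable constant $\varepsilon>0$ depending on $k_0, M_H, L_H, \|R^{*-1}\|, \|r_i^*\|, \|\delta_{i0}\|$.
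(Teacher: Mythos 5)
Your proposal is correct and follows essentially the same route as the paper's proof: the $\phi_2^*$ bound is the same direct integral estimate via Assumption~\ref{as1}(iii), and the $\phi_3$ bound uses the same add-and-subtract/telescoping decomposition through $R^{-1}-R^{*-1}$ (controlled via Corollary~\ref{cor:formation}) and the integrand differences (controlled via Assumptions~\ref{as1}(iii) and (v)). Your $\phi_1$ argument splits through $\nabla^2 f(\xi_{01})$ rather than comparing each row-projected Hessian to $\nabla^2 f(x^*)$ as the paper does, but both exploit the same row structure of $\hat{\nabla}^2 f$ and Assumption~\ref{as1}(vi) to reach the identical constant $\alpha_1 = 2G_H k_0$.
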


Proof of \textit{Lemma} \ref{lm:phiBounds} is provided in the Appendix.

Further analysis of the error dynamics in \eqref{eq:ChiDot} is carried out using the approach based on the robust stability theory, employing the Lyapunov function and S-procedure techniques.

Choose $V(\chi)=\chi^TP\chi$ as a candidate Lyapunov function. Let $\tau$, $\gamma_1,\gamma_2>0$ be three constants. Further consider: 
\begin{equation}
\begin{aligned}
&\dot{V} -\tau(\|\phi_1\|^2-\alpha_1 ^2\|C_1\chi\|^2) =\\
&2\chi^T P(A\chi+B\phi_1+B\phi_2^* +B\phi_3)-\tau(\|\phi_1\|^2-\alpha_1 ^2\chi ^TC_1^TC_1\chi)\\
&=\chi^T(PA+A^TP+\tau\alpha_1 ^2C_1^TC_1)\chi+2\chi^T PB\phi_1\\
&+2\chi^TPB\phi_2^* +2\chi^T PB\phi_3-\tau\|\phi_1\|^2.
\end{aligned}
\label{eq:Vdot}
\end{equation}

Using the fact that
\begin{equation*}
\begin{aligned}
2 \chi^TPB \phi_1=2 \frac{1}{\sqrt{\tau}}\chi^TPB \sqrt{\tau}\phi_1 \leq & \frac{1}{\tau}\chi ^TPBB^TP\chi\\ &+ \tau\|\phi_1\|^2, \\
2 \chi^TPB \phi_2^*=2 \frac{1}{\sqrt{\gamma_1}}\chi^TPB \sqrt{\gamma_1}\phi_2^* \leq& \frac{1}{\gamma_1}\chi ^T PBB^TP\chi\\ &+ \gamma_1\|\phi_2^*\|^2,\\
2 \chi^TPB \phi_3=2 \frac{1}{\sqrt{\gamma_2}}\chi^TPB \sqrt{\gamma_2}\phi_3 \leq& \frac{1}{\gamma_2}\chi ^T PBB^TP\chi \\ &+ \gamma_2\|\phi_3\|^2, 
\end{aligned}
\end{equation*}

we can rewrite (\ref{eq:Vdot}) as an inequality:

\begin{equation}
\begin{aligned}
&\dot{V} -\tau(\|\phi_1\|^2-\alpha_1 ^2\|C_1\chi\|^2)\\
&\leq\chi^T(PA+A^TP+\tau\alpha_1 ^2C_1^TC_1+\frac{1}{\tau}PBB^TP+\frac{1}{\gamma_1} PBB^TP\\
&+\frac{1}{\gamma_2} PBB^TP)\chi+\gamma_1\|\phi_2^*\|^2+ \gamma_2\|\phi_3\|^2.
\end{aligned}
\label{eq:VdotIneq}
\end{equation}

Now suppose that for some $\lambda>0$, the Linear Matrix Inequality
\begin{equation}
\begin{bmatrix}
\begin{aligned}
&PA+A^TP+\tau \alpha_1^2C_1^TC_1 \\
&+\lambda P
\end{aligned} &PB & PB& PB\\
B^TP &-\tau I &0 &0\\
B^TP  &0 &-\gamma_1 I &0\\
B^TP  &0 &0 &-\gamma_2 I\\
\end{bmatrix} <0
\label{eq:LMI}
\end{equation}

admits a solution $P=P^T>0, \: \tau>0, \: \gamma_1>0, \: \gamma_2 >0$. Then we have the following theorem as our main result.

\begin{theorem}
Suppose the function $f$ satisfies conditions (i)-(vi) in \textit{Assumption} \ref{as1}. Also, suppose that there exist $K_0$, $K_1$, $K_2$ and $\lambda>0$ such that the LMI (\ref{eq:LMI}) is feasible; i.e., there exists a matrix $P>0$ and constants $\tau>0, \gamma_1>0, \gamma_2>0$ such that (\ref{eq:LMI}) holds. Then the simultaneous extremum seeking and formation acquisition algorithm converges in the sense that $\lim_{t \to \infty}\|x_1(t)-x_0(t)\| = r_1^*$, $\lim_{t \to \infty}\|x_2(t)-x_0(t)\| = r_2^*$ and
\begin{equation}
\begin{aligned}
\limsup_{t \to \infty}
& \|x_0(t)-x^*\|^2 \\
 &\leq \frac{\gamma_1 k_0^2M_H^2 \|R^{*-1}\|^2(\|r_1^*\|^2+\|r_2^*\|^2)^2}{4 \lambda \sigma_{min}(P)}.
\label{eq:thMain}
\end{aligned}
\end{equation}
\label{th:mainResult}
\end{theorem}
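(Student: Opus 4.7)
The plan is to split the statement into the (easy) formation-acquisition claim and the (main) ultimate-bound claim on the extremum-seeking error $z$, and then feed the LMI (\ref{eq:LMI}) into the inequality (\ref{eq:VdotIneq}) through a Schur complement / S-procedure argument.

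For the formation part, I would simply invoke (\ref{eq:deltaioft}), which gives $\delta_i(t)=\delta_{i0}e^{-k_i t}\to 0$ exponentially for $i=1,2$. Since $\delta_i=r_i-r_i^*$ and $r_i=x_i-x_0$, this directly yields the claimed limits $\|x_i(t)-x_0(t)\|\to\|r_i^*\|$. This step is essentially bookkeeping and does not interact with the LMI at all.

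For the main bound, the core idea is to apply Schur complement to the LMI (\ref{eq:LMI}) on the three lower-right diagonal blocks $-\tau I$, $-\gamma_1 I$, $-\gamma_2 I$. This gives
\begin{equation*}
PA+A^TP+\tau\alpha_1^2 C_1^TC_1+\tfrac{1}{\tau}PBB^TP+\tfrac{1}{\gamma_1}PBB^TP+\tfrac{1}{\gamma_2}PBB^TP + \lambda P \le 0,
\end{equation*}
which is precisely the matrix appearing (with the $\lambda P$ dropped) inside the quadratic form on the right of (\ref{eq:VdotIneq}). Plugging this back into (\ref{eq:VdotIneq}) produces
\begin{equation*}
\dot V \le -\lambda V + \tau\bigl(\|\phi_1\|^2-\alpha_1^2\|C_1\chi\|^2\bigr) + \gamma_1\|\phi_2^*\|^2 + \gamma_2\|\phi_3(t)\|^2.
\end{equation*}
The S-procedure step is to kill the $\tau(\cdot)$ term using the bound on $\phi_1$ from Lemma \ref{lm:phiBounds}; since $\tau>0$ and $\|\phi_1\|^2\le\alpha_1^2\|C_1\chi\|^2$, that bracket is non-positive. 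What remains is a scalar linear ODE inequality $\dot V\le -\lambda V + \gamma_1(\alpha_2^*)^2 + \gamma_2\varepsilon^2 e^{-2\beta t}$ driven by a constant plus an exponentially decaying forcing.

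Applying the comparison lemma (or the explicit integrating-factor formula) to this inequality, the exponentially decaying term contributes nothing to the limit superior, and I am left with $\limsup_{t\to\infty} V(\chi(t)) \le \gamma_1 (\alpha_2^*)^2/\lambda$. Using $V(\chi)\ge\sigma_{\min}(P)\|\chi\|^2\ge\sigma_{\min}(P)\|z\|^2$ on one side and the explicit upper estimate of $\alpha_2^*$ from Lemma \ref{lm:phiBounds} on the other, I obtain exactly (\ref{eq:thMain}).

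The main obstacle, as I see it, is not conceptual but careful packaging: one has to make sure the Schur-complement manipulation of the $4\times 4$ block LMI is performed correctly and, simultaneously, that the S-procedure is applied only to the norm-bounded uncertainty $\phi_1$, whereas $\phi_2^*$ and $\phi_3$ are treated via straightforward Young-type inequalities (as already anticipated in (\ref{eq:Vdot})–(\ref{eq:VdotIneq})). One also needs to justify that $\chi(t)$ stays bounded on finite intervals so the comparison argument is legitimate — this follows because along any finite interval $\phi_3$ is bounded and $V$ itself satisfies a linear differential inequality, so no finite-time escape can occur and the $\limsup$ computation is valid.
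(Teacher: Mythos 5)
Your proposal is correct and follows essentially the same route as the paper's own proof: use the LMI to turn (\ref{eq:VdotIneq}) into $\dot V+\lambda V\le \tau(\|\phi_1\|^2-\alpha_1^2\|C_1\chi\|^2)+\gamma_1\|\phi_2^*\|^2+\gamma_2\|\phi_3\|^2$, discard the S-procedure term via Lemma \ref{lm:phiBounds}, integrate the resulting scalar inequality (Gronwall--Bellman in the paper, comparison lemma in yours), and pass to the $\limsup$ using $V\ge\sigma_{\min}(P)\|z\|^2$ and the explicit bound on $\alpha_2^*$. Your bookkeeping even correctly retains the factor $1/4$ from $(\alpha_2^*)^2$ that appears in the theorem statement.
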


The proof of \textit{Theorem} \ref{th:mainResult} is provided in the Appendix.

\section{Simulation Results}
The proposed extremum seeking control scheme is tested Matlab Simulink environment.

We first set up the simulation using a field function with elliptical level sets of the form
\begin{equation}
f(x,y)=1000\cdot e^{-\frac{(x-100)^2}{70000}-\frac{(y-100)^2}{70000}}
\label{eq:fieldCirc}
\end{equation}

Control gains values used for simulations are $k_1=k_2=0.05$, $k_0=0.7$ and the formation size $\|r^*_{1,2}\|=0.4$ meters.

Search results for the field (\ref{eq:fieldCirc}) are shown in figure \ref{fig:exp1}.
\begin{figure}[h!]
\centering
\includegraphics[scale=0.32]{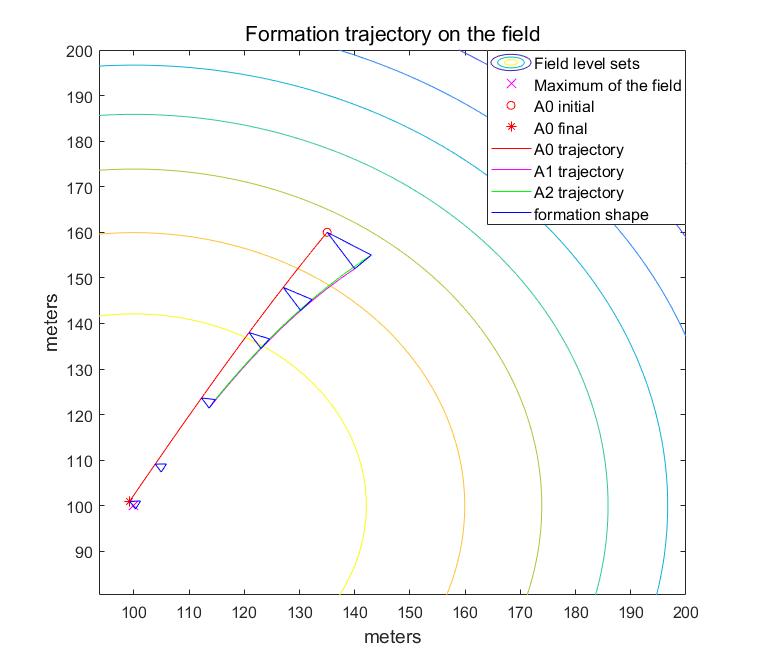}
\includegraphics[scale=0.35]{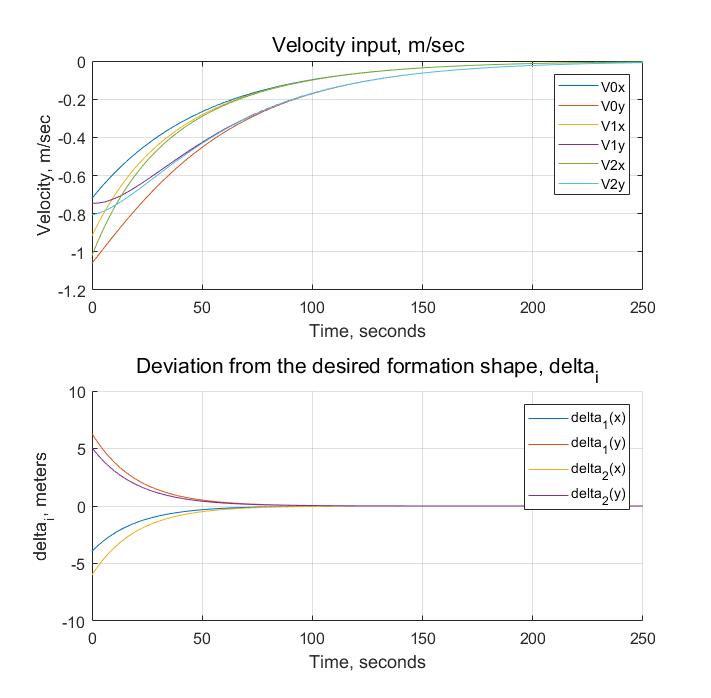}
\caption{\small{Simulation results for the field with elliptical level sets}}\label{fig:exp1}
\end{figure}

For the function (\ref{eq:fieldCirc}) with $\lambda=0.01$, LMI (\ref{eq:LMI}) has a solution $\tilde{\tau}=\frac{\tau}{\gamma_1}=0.1429$, $\tilde{\gamma_2}=\frac{\gamma_2}{\gamma_1}=85901$ and $\sigma_{min}(\tilde{P})=\frac{\sigma_{min}(P)}{\gamma_1}=0.2$.

The numerical value of the theoretical bound in (\ref{eq:thMain}) obtained from the LMI solution is $\lim_{t \to \infty} \|x_0(t)-x^* \|=2.53$ meters. Final extremum seeking control error from simulation is $0.218$ meters.

Simulation experiments have shown that the proposed control algorithm works for fields with more general level sets, that are only locally convex. Simulation results  for the field (\ref{eq:field2}) are presented in  figure \ref{fig:exp2}.

Both simulation experiments have shown that the agents successfully acquire the desired formation shape and converge to the neighbourhood of the maximum of the field.

\begin{equation}
\begin{aligned}
f(x,y)=&e^{- \Big ( \frac{x-100}{100} \Big )^2- \Big ( \frac{y-100}{100} \Big )^2 }\\
&+e^{ -\frac{((x-100)+(y-100))^2}{707}
-\frac{(-(x-100)+(y-100))^2}{143} }\\
&+e^{-\frac{(x-100)^2}{1000}-\frac{(y-100)^2}{50} }
\end{aligned}
\label{eq:field2}
\end{equation}

\begin{figure}[h!]
\centering
\includegraphics[scale=0.35]{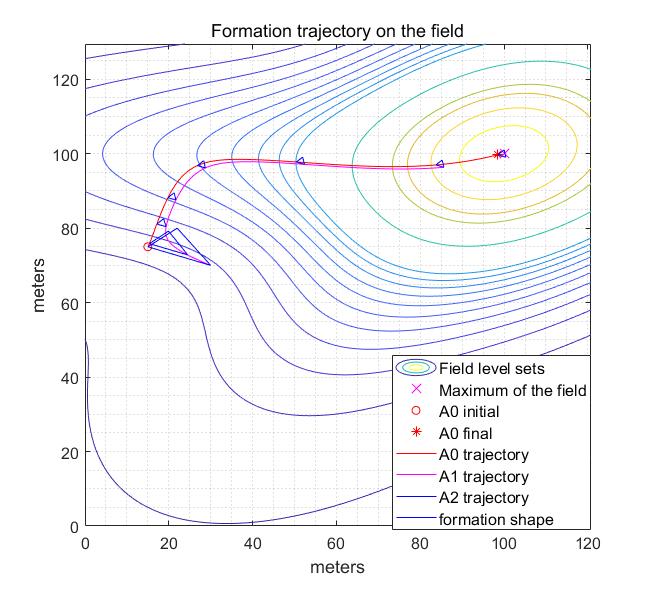}
\includegraphics[scale=0.37]{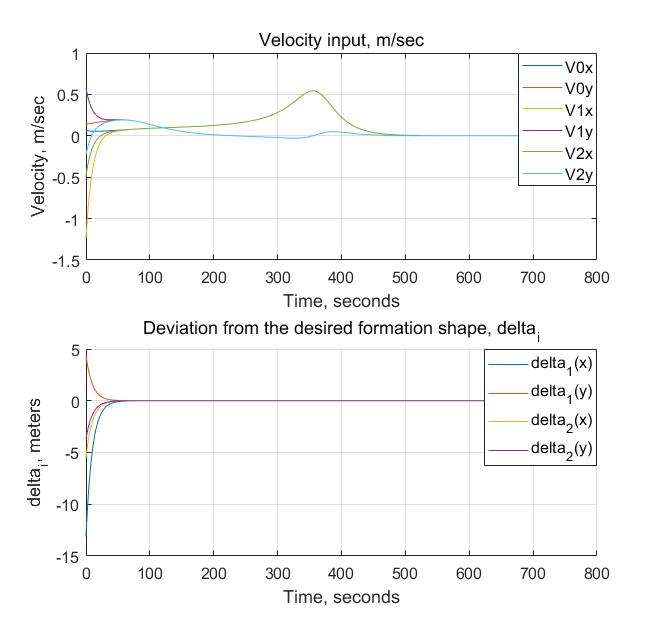}
\caption{\small{Simulation results for a field with  non-circular level sets}}\label{fig:exp2}
\end{figure}

\section{Conclusion}
We have presented a simultaneous formation acquisition and extremum seeking control scheme for a team of three robots to locate a maximum point of a two dimensional signal field. We have shown that the proposed algorithm guarantees convergence to a specified neighbourhood of the maximum of the field while ensuring that the desired formation is acquired and maintained. Our analytical results are supported by simulations, which have demonstrated that the field only has to be locally convex for the algorithm to work.

In future studies, we plan to extend our control method to unicycle robot kinematics and use control techniques that would allow a formation to rotate and change its size in order to improve extremum seeking performance and applicability to more general non-concave settings. We also plan to study the effects of measurement noise and work on a practical implementation of the proposed scheme.

\section*{Appendix 1}
\subsection{Proof of Lemma \ref{lm:phiBounds}}
To find an upper bound on the uncertainty input $\phi_1(t)$, note that the matrix $\hat{\nabla} ^2 f(\xi_{01},\xi_{02})=\begin{bmatrix}
                        \partial_{xx}f(\xi_{01}) & \partial_{xy}f(\xi_{01}) \\
                        \partial_{xy}f(\xi_{02}) & \partial_{xy}f(\xi_{02}) \\
                        \end{bmatrix}$ 
                        satisfies:

\begin{equation}
\begin{aligned}
\hat{\nabla} ^2 f(\xi_{01},\xi_{02})
                        =&\begin{bmatrix} 1&0\\ 0&0 \end{bmatrix}
                        \begin{bmatrix}
                        \partial_{xx}f(\xi_{01}) & \partial_{xy}f(\xi_{01}) \\
                        \partial_{xy}f(\xi_{01}) & \partial_{xy}f(\xi_{01}) \\
                        \end{bmatrix}\\
                        &+\begin{bmatrix} 0&0\\ 0&1 \end{bmatrix}
                        \begin{bmatrix}
                        \partial_{xx}f(\xi_{02}) & \partial_{xy}f(\xi_{02}) \\
                        \partial_{xy}f(\xi_{02}) & \partial_{xy}f(\xi_{02}) \\
                        \end{bmatrix} =\\ &
                        \begin{bmatrix} 1&0\\ 0&0 \end{bmatrix} \nabla ^2 f(\xi_{01})
                        +\begin{bmatrix} 0&0\\ 0&1 \end{bmatrix} \nabla ^2 f(\xi_{02}).
\end{aligned}
\label{eq:AppxnablaXi01Xi02}
\end{equation}

Using (\ref{eq:AppxnablaXi01Xi02}), we rewrite $B_{\beta}$ as:
\begin{equation}
\begin{aligned}
B_{\beta}=&K_0 \begin{bmatrix} 1&0\\ 0&0 \end{bmatrix} (\nabla ^2 f(\xi_{01})-\nabla ^2 f(x^*))\\
&+ K_0 \begin{bmatrix} 0&0\\ 0&1 \end{bmatrix} (\nabla ^2 f(\xi_{02})-\nabla ^2 f(x^*)).
\end{aligned}
\label{eq:AppxBbetaExp}
\end{equation}

Using (\ref{eq:AppxBbetaExp}) and \textit{Assumption} \ref{as1} (vi), the norm of $B_{\beta}$ is:

\begin{equation}
\begin{aligned}
\|B_{\beta}\|\leq &\|K_0\| \Bigg \| \Bigg( \begin{bmatrix} 1&0\\ 0&0 \end{bmatrix} (\nabla ^2 f(\xi_{01})-\nabla ^2 f(x^*)) \\
& +\begin{bmatrix} 0&0\\ 0&1 \end{bmatrix} (\nabla ^2 f(\xi_{02})-\nabla ^2 f(x^*))\Bigg)\Bigg \| \\
\leq & 
\|K_0\|\Bigg( \Bigg\| \begin{bmatrix} 1&0\\ 0&0 \end{bmatrix}\Bigg\| \|\nabla ^2 f(\xi_{01})-\nabla ^2 f(x^*)\| \\ &+ \Bigg \|\begin{bmatrix} 0&0\\ 0&1 \end{bmatrix}\Bigg \| \|\nabla ^2 f(\xi_{02})-\nabla ^2 f(x^*)\|\Bigg)\leq 2G_Hk_0,
\end{aligned}
\end{equation}

Hence, we find a bound on the $\phi_1$ uncertainty input as:
\begin{equation}
\|\phi_1\|\leq \alpha_1 \|C_1 \chi\|,
\label{eq:Appxphi1norm}
\end{equation}

where $\alpha_1=2G_Hk_0$.

The bound on the uncertainty input $\phi_2 ^*$ follows from the definition (\ref{eq:phiDef}), \textit{Assumption} \ref{as1}(iii) and the properties of definite integrals.\\

From the definition of $\phi_2 ^*$ (\ref{eq:phiDef}):
\begin{equation}
\begin{aligned}
\|\phi_2 ^*\|&\leq \frac{1}{2}\|K_0\|\|R^{*-1}\|\\
& \times (\Big |\int _0 ^1 (1-\xi_1)r_1^{*T} \nabla ^2f(x_0+\xi_1 r_1^*)r_1^* d\xi_1\Big |^2\\
&+\Big |\int _0 ^1 (1-\xi_2)r_2^{*T} \nabla ^2f(x_0+\xi_2 r_2^*)r_2^* d\xi_2\Big |^2)^{\frac{1}{2}}.
\end{aligned}
\label{eq:AppPhi2s1}
\end{equation}

We know from the property of define integrals that:
\begin{equation}
\begin{aligned}
&\Big |\int _0 ^1 (1-\xi_i)r_i^{*T} \nabla ^2f(x_0+\xi_1 r_i^*)r_i^* d\xi_i \Big |\\
&\leq \int _0 ^1\Big  \|(1-\xi_i)r_i^{*T} \nabla ^2f(x_0+\xi_1 r_i^*)r_i^* \Big \|d\xi_i\\
&\leq \int _0 ^1|(1-\xi_i)|\|r_i^{*T}\|\|\nabla ^2f(x_0+\xi_1 r_i^*)\|\|r_i^*\|d\xi_i\\
&\leq M_H\|r_i^*\|^2,
\end{aligned}
\label{eq:AppPhi2s2}
\end{equation}

where
\begin{equation*}
\begin{aligned}
|(1-\xi_i)| &\leq 1, \\
\|\nabla ^2f(x_0+\xi_1 r_i^*)\|&\leq M_H.
\end{aligned}
\end{equation*}

From (\ref{eq:AppPhi2s1}) and (\ref{eq:AppPhi2s2}), we obtain the bound on $\phi_2 ^*$:
\begin{equation}
\|\phi_2 ^*\|\leq\frac{k_0M_H}{2}\|R^{*-1}\| (\|r_1^*\|^2+\|r_2^*\|^2).
\label{eq:AppPhi2s}
\end{equation}

Next, we establish a bound on the uncertainty input $\phi_3(t)$. By adding and subtracting\\$\frac{1}{2}K_0R^{*-1}\begin{bmatrix}
\int _0 ^1 (1-\xi_1)r_1^T \nabla ^2f(x_0+\xi_1 r_1)r_1 d\xi_1 \\
\int _0 ^1 (1-\xi_2)r_2^T \nabla ^2f(x_0+\xi_2 r_2)r_2 d\xi_2
\end{bmatrix}$,\\ we rewrite $\phi_3(t)$ as :\\
\begin{equation}
\begin{aligned}
& \phi_3(t)=\phi_3^*(t) + \tilde{\phi_3(t)},
\end{aligned}
\end{equation}
where 

\begin{equation}
\begin{aligned}
& \phi_3^*(t) =\\
&\frac{1}{2}K_0(R^{-1}-R^{*-1})\begin{bmatrix}
\int _0 ^1 (1-\xi_1)r_1^T \nabla ^2f(x_0+\xi_1 r_1)r_1 d\xi_1 \\
\int _0 ^1 (1-\xi_2)r_2^T \nabla ^2f(x_0+\xi_2 r_2)r_2 d\xi_2
\end{bmatrix},\\
&\tilde{\phi}_3(t)=\frac{1}{2}K_0R^{*-1}\Bigg[\begin{matrix}
\int _0 ^1 \Big(r_1^T \nabla ^2f(x_0+\xi_1 r_1)r_1 \\
\int _0 ^1 \Big(r_2^T \nabla ^2f(x_0+\xi_2 r_2)r_2
\end{matrix}\\
&\begin{matrix}
-r_1^{*T} \nabla ^2f(x_0+\xi_1 r_1^*)r_1^*\Big)(1-\xi_1)d\xi_1 \\
-r_2^{*T}  \nabla ^2f(x_0+\xi_2 r_2^*)r_2^*\Big)(1-\xi_2)d\xi_2
\end{matrix}\Bigg].
\end{aligned}
\end{equation}

From \textit{Assumption} \ref{as1}(iii):

\begin{equation}
\begin{aligned}
\|\phi_3^*(t)\|&\leq \frac{k_0M_H}{2}\|R(t)^{-1} - R^{*-1}\|(\|r_1\|^2+\|r_2\|^2).
\end{aligned}
\label{eq:AppxPhi3s}
\end{equation}

We know that $r_i(t)$ converges to a constant value $r_i^*$, therefore $(\|r_1\|^2+\|r_2\|^2)$ is bounded. Also, from (\ref{eq:deltaioft}), $R(t) \to R^*$ exponentially fast and we show in the next subsection that $R^{-1}(t) \to R^{*-1}$ at the same rate.  Hence, $\phi_3^*(t) \to 0$ exponentially fast as $t \to \infty$ .\\

To find a bound on $\tilde{\phi}_3(t)$, we analyse the function under the integral sign:\\
\begin{equation}
\begin{aligned}
&r_i^T \nabla ^2f(x_0+\xi_i r_i)r_i-r_i^{*T}  \nabla ^2f(x_0+\xi_i r_i^*)r_i^*\\
=&r_i^T \nabla ^2f(x_0+\xi_i r_i)r_i-r_i^{T}  \nabla ^2f(x_0+\xi_i r_i^*)r_i\\
&+r_i^{T}  \nabla ^2f(x_0+\xi_i r_i^*)r_i-r_i^{*T}  \nabla ^2f(x_0+\xi_i r_i^*)r_i\\
=&r_i^{T}(\nabla ^2f(x_0+\xi_i r_i)-\nabla ^2f(x_0+\xi_i r_i^*))r_i\\
&+r_i^{T}  \nabla ^2f(x_0+\xi_i r_i^*)r_i-r_i^{*T}  \nabla ^2f(x_0+\xi_i r_i^*)r_i^*.
\end{aligned}
\end{equation}

Using \textit{Assumption} \ref{as1}(v)
\begin{equation}
\begin{aligned}
&\| r_i^{T}(\nabla ^2f(x_0+\xi_i r_i)-\nabla ^2f(x_0+\xi_i r_i^*))r_i \|\\
& \leq \|r_i \| ^2 \| \nabla ^2f(x_0+\xi_i r_i)-\nabla ^2f(x_0+\xi_i r_i^*)\|\\
& \leq \|r_i \| ^2 L_H \| r_i - r_i^*\|,
\end{aligned}
\label{eq:AppxC}
\end{equation}
$r_i(t) \to r_i^*$ exponentially fast and $\|r_i\|^2$ is bounded. Therefore,
$\| r_i^{T}(\nabla ^2f(x_0+\xi_i r_i)-\nabla ^2f(x_0+\xi_i r_i^*))r_i \|\to 0$ exponentially fast as $t \to \infty$ .

Using \textit{Assumption} \ref{as1}(iii)\\
\begin{equation}
\begin{aligned}
& \| r_i^{T}  \nabla ^2f(x_0+\xi_i r_i^*)r_i-r_i^{*T}  \nabla ^2f(x_0+\xi_i r_i^*)r_i^*\|\\
& \leq M_H \|r_i - r_i^* \| \|r_i + r_i^* \|.
\end{aligned}
\end{equation}
Next, we note that $\|r_i + r_i^* \| \leq \|r_i - r_i^* \| + 2\|r_i^*\|$. Thus

\begin{equation}
\begin{aligned}
& \| r_i^{T}  \nabla ^2f(x_0+\xi_i r_i^*)r_i-r_i^{*T}  \nabla ^2f(x_0+\xi_i r_i^*)r_i^*\|\\
& \leq M_H ( \|r_i - r_i^* \|^2 + 2\|r_i^*\|\|r_i - r_i^* \|),
\end{aligned}
\label{eq:AppxB}
\end{equation}
with $r_i(t) \to r_i^*$ exponentially fast and $\|r_i^*\|$ is a constant. Therefore, we conclude that\\ $\| r_i^{T}  \nabla ^2f(x_0+\xi_i r_i^*)r_i-r_i^{*T}  \nabla ^2f(x_0+\xi_i r_i^*)r_i^*\| \to 0$\\ exponentially fast as $t \to \infty$ .

From (\ref{eq:AppxPhi3s}), (\ref{eq:AppxC}) and (\ref{eq:AppxB}), we conclude that\\
\begin{equation}
\| \phi_3 (t)\|=\|\phi_3^*(t) + \tilde{\phi}_3(t)\| \to 0, ~ as~ t\to \infty
 \label{eq:AppPhi3Lim}
\end{equation}
exponentially fast. Hence, we can define a bound on $\phi_3 (t)$ as
\begin{equation}
\|\phi_3(t)\|=\varepsilon e^{- \beta t},
\label{eq:AppPhi3ExpBound}
\end{equation}
where $\varepsilon, ~\beta $ are some positive constants.

\subsection{Proof that $R^{-1}(t)\to R^{*-1}$ exponentially}
By definition $R(t)=\begin{bmatrix}r_1(t)^T\\r_2(t)^T\end{bmatrix}$ and $R^*=\begin{bmatrix}r_1^{*T}\\r_2{*T}\end{bmatrix}$.
In (\ref{eq:deltai})-(\ref{eq:rioft}) we proved, that $r_1(t)\to r_1^*$ and $r_2(t)\to r_2^*$ exponentially. Therefore, for any sufficiently small $\varepsilon>0$, there exists $t_{\varepsilon}$ such that, for all $t>t_{\varepsilon}$:
\begin{equation}
\begin{aligned}
\|r_1(t)\|&>\|r_1^*\|-\varepsilon, \\
\|r_2(t)\|&>\|r_2^*\|-\varepsilon.
\end{aligned}
\label{eq:AppRvareps}
\end{equation} 

Additionally, Corollary \ref{cor:formation} states that:
\begin{equation}
|\det R(t)|> min(|\rho_0|,~ |\rho^*|)\cdot \|r_1(t)\|\cdot \|r_2(t)\|,
\label{eq:AppRCor2}
\end{equation}
which together with (\ref{eq:AppRvareps}) implies the following inequality
\begin{equation}
\begin{aligned}
|\det R(t)|&> min(|\rho_0|,~ |\rho^*|)\cdot(\|r_1^*\|-\varepsilon )\cdot (\|r_2^*\|-\varepsilon )\\
&\overset{\Delta}{=} \bar{\varepsilon} >0,~\forall t>t_{\varepsilon}.
\label{eq:AppRCor2Upd}
\end{aligned}
\end{equation}
Furthermore, using notation in (\ref{eq:ri}), we can write
\begin{equation}
\begin{aligned}
R^{-1}(t)=\frac{1}{\det R(t)}\begin{bmatrix}r_{2y}(t)&-r_{1y}(t)\\
-r_{2x}(t)&r_{1x}(t)
\end{bmatrix}.
\label{eq:AppRRinv}
\end{aligned}
\end{equation}
Therefore, from (\ref{eq:AppRCor2Upd}) and (\ref{eq:AppRRinv}), it follows that for all $t>t_{\varepsilon}$
\begin{equation}
\begin{aligned}
\|R^{-1}(t)\|<\frac{1}{\bar{\varepsilon}} \left \| \begin{bmatrix}r_{2y}(t)&-r_{1y}(t)\\
-r_{2x}(t)&r_{1x}(t)
\end{bmatrix}\right \|.
\label{eq:AppRRinvNorm}
\end{aligned}
\end{equation}
Since $r_i(t)\to r_i^*$, we can ascertain that $r_{1x}(t)$, $r_{1y}(t)$, $r_{2x}(t)$ and $r_{2y}(t)$ are bounded, which implies that there exists $\hat{\varepsilon}>0$ such that
\begin{equation}
\begin{aligned}
\|R^{-1}(t)\|<\frac{1}{\bar{\varepsilon}}\hat{\varepsilon} ,~\forall t>t_{\varepsilon}.
\label{eq:AppRRinvUpd}
\end{aligned}
\end{equation}
Using the following identity 
\begin{equation}
\|R^{-1}(t)-R^{*-1}\|=\|R^{-1}(t)(R(t)-R^*)R^{*-1}\|,
\end{equation}
we further obtain
\begin{equation}
\begin{aligned}
\|R^{-1}(t)-R^{*-1}\|&\leq\|R^{-1}(t)\|\cdot \|R(t)-R^*\|\cdot \|R^{*-1}\| \\
&< \frac{\hat{\varepsilon}}{\bar{\varepsilon}}\|R^{*-1}\| \cdot \|R(t)-R^*\|,~\forall t>t_{\varepsilon}.
\label{eq:AppRRinvDiff}
\end{aligned}
\end{equation}
Thus, $R^{-1}(t)\to R^{*-1}$ as required.

\subsection{Proof of Theorem \ref{th:mainResult}}

If LMI (\ref{eq:LMI}) admits a solution $P=P^T>0$, $\tau>0$, $\gamma_1>0$, $\gamma_2>0$, then for all $\phi_1,~ \phi_2^*, ~\phi_3$ and all $t>t_{\varepsilon}$
\begin{equation}
\begin{aligned}
&\dot{V} + \lambda V <\tau(\|\phi_1\|^2-\alpha_1^2\|C_1\chi\|^2)\\
&+\gamma_1\|\phi_2^*\|^2+ \gamma_2\|\phi_3\|^2,
\end{aligned}
\end{equation}
and taking into account (\ref{eq:constraints1}), (\ref{eq:constraints2}), we obtain that for all admissible $\phi_1,~ \phi_2^*,~ \phi_3$ and $\forall t>t_{\varepsilon}$,
\begin{equation}
\dot{V}+ \lambda V < \gamma_1 \alpha_2^{*2} + \gamma_2\varepsilon^2e^{-2\beta t}.
\label{eq:AppVdotineq}
\end{equation}

According to the Gronwall-Bellman Lemma \cite{khalil1996noninear}, inequality (\ref{eq:AppVdotineq}) implies that:\\
\begin{equation}
\begin{aligned}
V(t)\leq& e^{-\lambda t}V(0)+ \gamma_1 \alpha_2^{*2} \int _{0}^t e^{-\lambda(t-\tau)}d\tau\\
&+\gamma_2\varepsilon^2\int _{0}^t e^{-2\beta \tau}e^{-\lambda(t-\tau)}d\tau\\
 \leq& e^{-\lambda t}V(0)+\gamma_1 \alpha_2^{*2} \frac{(1-e^{-\lambda t})}{\lambda}\\
 &+\gamma_2 \varepsilon^2\frac{(e^{-2\beta t}-e^{-\lambda t})}{2\beta -\lambda}.
\end{aligned}
\label{eq:AppVineq}
\end{equation}

This implies that for any initial condition $\chi(0)$ the trajectories of the system (\ref{eq:ChiDot}), (\ref{eq:constraints1}), (\ref{eq:constraints2}) satisfy
\begin{equation}
\begin{aligned}
\lim_{t \to \infty} (\| \delta_1(t)\|^2+\| \delta_2(t)\|^2+\| z(t)\|^2) \leq \frac{\gamma_1\alpha_2^{*2}}{\lambda \sigma_{min}(P)}
\end{aligned}
\label{eq:AppLimChi}
\end{equation}

Note that from (\ref{eq:deltaioft}), $\delta_1(t) \to 0$, $\delta_2(t) \to 0$ exponentially fast. Therefore, we can rewrite (\ref{eq:AppLimChi}) as

\begin{equation}
\begin{aligned}
\lim_{t \to \infty} (\| z(t)\|^2) & \leq \frac{\gamma_1\alpha_2^{*2}}{\lambda \sigma_{min}(P)}\\
 &\leq \frac{\gamma_1k_0^2M_H^2\|R^{*-1}\|^2(\|r_1^*\|^2+\|r_2^*\|^2)^2}{\lambda \sigma_{min}(P)}.
\end{aligned}
\label{eq:AppLimZ}
\end{equation}

\bibliographystyle{IEEEtrans}
\bibliography{AdaptiveLocalization,VectorCalcOpt,mainESC}

\begin{thebibliography}{10}

\bibitem{ariyur2003real}
Kartik~B Ariyur and Miroslav Krstic.
\newblock {\em Real-time optimization by extremum-seeking control}.
\newblock John Wiley \& Sons, 2003.

\bibitem{brinon2013consensus}
Lara Brin{\'o}n-Arranz and Luca Schenato.
\newblock Consensus-based source-seeking with a circular formation of agents.
\newblock In {\em Proc. European Control Conference}, pages 2831--2836, 2013.

\bibitem{brinon2016distributed}
Lara Bri{\~n}{\'o}n-Arranz, Luca Schenato, and Alexandre Seuret.
\newblock Distributed source seeking via a circular formation of agents under
  communication constraints.
\newblock {\em IEEE Trans. on Control of Network Systems}, 3(2):104--115, 2016.

\bibitem{brinon2011collaborative}
Lara Brin{\'o}n-Arranz, Alexandre Seuret, and Carlos Canudas-de Wit.
\newblock Collaborative estimation of gradient direction by a formation of auvs
  under communication constraints.
\newblock In {\em Proc. Decision and Control and European Control Conference
  (CDC-ECC), 2011 50th IEEE Conference on}, pages 5583--5588, 2011.

\bibitem{Cao06}
M.~Cao, B.D.O. Anderson, and A.S. Morse.
\newblock Localization with imprecise distance information in sensor networks.
\newblock {\em Systems and Control Letters}, 55:887--893, Nov. 2006.

\bibitem{cochran2009nonholonomic}
Jennie Cochran and Miroslav Krstic.
\newblock Nonholonomic source seeking with tuning of angular velocity.
\newblock {\em IEEE Trans. on Automatic Control}, 54(4):717--731, 2009.

\bibitem{Dandach08_SCL}
S.~Dandach, B.~Fidan, S.Dasgupta, and B.~D.~O. Anderson.
\newblock A continuous time linear adaptive source localization algorithm
  robust to persistent drift.
\newblock {\em Systems and Control Letters}, 58(1):7--6, Jan. 2009.

\bibitem{Fidan08_TrSP}
B.~Fidan, S.Dasgupta, and B.~D.~O. Anderson.
\newblock Guaranteeing practical convergence in algorithms for sensor and
  source localization.
\newblock {\em IEEE Trans. on Signal Processing}, 56(9):4458--4469, Sep. 2008.

\bibitem{fu2008sliding}
Lina Fu and U~Ozguner.
\newblock Sliding mode in constrained source tracking with non-holonomic
  vehicles.
\newblock In {\em Variable Structure Systems, 2008. VSS'08. International
  Workshop on}, pages 30--34. IEEE, 2008.

\bibitem{khalil1996noninear}
Hassan~K Khalil.
\newblock Noninear systems.
\newblock {\em Prentice-Hall, New Jersey}, 2(5):5--1, 1996.

\bibitem{Lang87}
S.~Lang.
\newblock {\em Calculus of Several Variables}.
\newblock Springer-Verlag, New York, NY, USA, 1987.

\bibitem{lin2017stochastic}
Jinbiao Lin, Shiji Song, Keyou You, and Miroslav Krstic.
\newblock Stochastic source seeking with forward and angular velocity
  regulation.
\newblock {\em Automatica}, 83:378--386, 2017.

\bibitem{marsden2003vector}
Jerrold~Eldon Marsden and Anthony Tromba.
\newblock {\em Vector calculus}.
\newblock Macmillan, 2003.

\bibitem{matveev2011navigation}
Alexey~S Matveev, Hamid Teimoori, and Andrey~V Savkin.
\newblock Navigation of a unicycle-like mobile robot for environmental extremum
  seeking.
\newblock {\em Automatica}, 47(1):85--91, 2011.

\bibitem{mayhew2007robust}
Christopher~G Mayhew, Ricardo~G Sanfelice, and Andrew~R Teel.
\newblock Robust source-seeking hybrid controllers for autonomous vehicles.
\newblock In {\em American Control Conference, 2007. ACC'07}, pages 1185--1190.
  IEEE, 2007.

\bibitem{moore2010source}
Brandon~J Moore and Carlos Canudas-de Wit.
\newblock Source seeking via collaborative measurements by a circular formation
  of agents.
\newblock In {\em Proc. American Control Conference}, pages 6417--6422, 2010.

\bibitem{ogren2004cooperative}
Petter Ogren, Edward Fiorelli, and Naomi~Ehrich Leonard.
\newblock Cooperative control of mobile sensor networks: Adaptive gradient
  climbing in a distributed environment.
\newblock {\em Automatic Control, IEEE Transactions on}, 49(8):1292--1302,
  2004.

\bibitem{Path05}
P.~Pathirana, N.~Bulusu, A.~Savkin, and S.~Jha.
\newblock Node localization using mobile robots in delay-tolerant sensor
  networks.
\newblock {\em IEEE Transactions on Mobile Computing}, pages 285 -- 296,
  May-June 2005.

\bibitem{Patwari05}
N.~Patwari, J.~N. Ash, S.~Kyperountas, A.~O. Hero, R.~L. Moses, and N.~S.
  Correal.
\newblock Locating the nodes: cooperative localization in wireless sensor
  network.
\newblock {\em IEEE Signal Processing Magazine}, pages 54 -- 69, July 2005.

\bibitem{Roum02}
S.~I. Roumeliotis and G.~A. Bekey.
\newblock Distributed multirobot localization.
\newblock {\em IEEE Transactions on Robotics and Automation}, pages 781 -- 795,
  Oct. 2002.

\bibitem{skobeleva2017esc1d}
Anna Skobeleva, V.~Ugrinovskii, and Ian~R. Petersen.
\newblock Cooperative extremum seeking for a one-dimensional non-linear map
  using a two-robot formation.
\newblock In {\em Australian and New Zealand Control Conference (ANZCC), 2017}.
  IEEE, 2017.

\bibitem{zhang2007source}
Chunlei Zhang, Daniel Arnold, Nima Ghods, Antranik Siranosian, and Miroslav
  Krstic.
\newblock Source seeking with non-holonomic unicycle without position
  measurement and with tuning of forward velocity.
\newblock {\em Systems \& Control Letters}, 56(3):245--252, 2007.

\bibitem{zhang2007extremum}
Chunlei Zhang, Antranik Siranosian, and Miroslav Krsti{\'c}.
\newblock Extremum seeking for moderately unstable systems and for autonomous
  vehicle target tracking without position measurements.
\newblock {\em Automatica}, 43(10):1832--1839, 2007.

\bibitem{zhang2010cooperative}
Fumin Zhang and Naomi~Ehrich Leonard.
\newblock Cooperative filters and control for cooperative exploration.
\newblock {\em IEEE Trans. on Automatic Control}, 55(3):650--663, 2010.

\end{thebibliography}

\end{document}